\newtheorem{lemma}{Lemma}
\newtheorem{theorem}{Theorem}
\newtheorem{remark}{Remark}
\newtheorem{proposition}{Proposition}
\title{Connection problem for the tau-function of the Sine-Gordon reduction of Painlev\'{e}-III equation via the Riemann-Hilbert approach. } 
\date{}
\begin{document}
\maketitle
\centerline{ A.~Its}
\centerline{{\it Department of Mathematical Sciences,}}
\centerline{{\it Indiana University -- Purdue University  Indianapolis}}
\centerline{{\it Indianapolis, IN 46202-3216, USA}}
\vskip .2in
\centerline{ A.~Prokhorov}
\centerline{{\it Department of Mathematical Sciences,}}
\centerline{{\it Indiana University -- Purdue University  Indianapolis}}
\centerline{{\it Indianapolis, IN 46202-3216, USA}}

\bigskip\bigskip\bigskip

\noindent{\bf Abstract.}
We evaluate explicitly, in terms of the Cauchy data, the constant pre-factor in the large $x$  asymptotics
of the Painlev\'e III tau-function. Our result proves the conjectural formula for this pre-factor obtained
recently by O. Lisovyy,  Y. Tykhyy, and the first co-author with the help of the recently discovered
connection of the Painlev\'e tau-functions with the Virasoro conformal blocks. Our approach  does not use this connection,
and it is based  on the Riemann-Hilbert method.

\bigskip\bigskip\bigskip

\section{Introduction}
We consider the  particular  case of the third Painlev\'e equation,
which is  a radial-symmetric reduction of the elliptic sine-Gordon equation
\begin{gather}\label{sineP3}
u_{xx} + \frac{u_x}{x} + \sin u = 0.
\end{gather}
Starting from the pioneering works \cite{barouch},  \cite{MTW}  on the Ising model, this equation
has been playing an increasingly important role, as a ``nonlinear Bessel function'',
in  a growing  number of physical applications (see e.g., \cite{FIKN} and
references therein). Apparently, the first appearance of equation (\ref{sineP3}) in the physical applications  should be
credited to work of J. M. Myers \cite{My}.

Equation (\ref{sineP3}) can be written as a  (non-autonomous) Hamiltonian system,
\begin{gather}\label{hform}
\frac{du}{dx} = \frac{\partial{\mathcal{H}}}{\partial v}, \quad \quad \frac{dv}{dx} = -\frac{\partial{\mathcal{H}}}{\partial u},\nonumber
\end{gather}
on the phase space
${\mathbb{R}}^2 = \{(u,v)\}$ equipped with the canonical symplectic structure,
\begin{equation}\label{symp}
\Omega = dv\wedge du.
\end{equation}
The Hamiltonian ${\mathcal{H}}$ is given  by the formula
\begin{equation}\label{ham}
{\mathcal{H}}= \frac{v^2}{2x} - x\cos u.\nonumber
\end{equation}
We are concerned with the global asymptotic analysis of the $\tau$-{\it function} corresponding
to   the Painlev\'e III equation (\ref{sineP3}) which is defined according to the equation (see \cite{JM}, \cite{Ok}),
\begin{equation}\label{taudef}
\frac{d\ln\tau }{dx} = -\frac{1}{4}{\mathcal{H}}.
\end{equation}
In fact, it is  this tau-function (evaluated for a special family of  solutions of equation (\ref{sineP3}))
that played a key role in the above mentioned Barouch-McCoy-Tracy-Wu  theory,  and, since then,
it has appeared in many problems of  statistical mechanics and quantum field theory.
Let us now remind some of the basic   known facts about the asymptotics of the solutions  of equation (\ref{sineP3}).
We refer the reader to monograph \cite{FIKN} for more details and for the history of the question.

Equation (\ref{sineP3}) possesses  a  two-parameter family of solutions  characterized by the
following behavior at $x=0$,
\begin{equation}\label{atzero}
u(x) = \alpha\ln x + \beta +O\left(x^{2-|\Im \alpha|}\right), \quad x \to 0, 
\end{equation}
where the complex numbers $\alpha\in\mathbb{C}$, $|\Im{\alpha}| < 2$ and   $\beta\in\mathbb{C}$  can be taken as parameters - the Cauchy data,  of the solution $u(x)\equiv u(x|\alpha, \beta)$. The behavior of the solution $u(x|\alpha, \beta)$ as $x\to +\infty$, is known. For an open set  in the space of
parameters $\alpha$, $\beta$, which we will describe later,  the large $x$ behavior of $u(x|\alpha, \beta)$  is oscillatory, and it is given by the formulae,
$$
u(x) = b_+e^{ix}x^{i\nu - 1/2} \left( 1 + O\left(\frac{1}{x}\right)\right)
+ b_-e^{-ix}x^{-i\nu - 1/2} \left( 1 + O\left(\frac{1}{x}\right)\right) + 
$$
\begin{equation}\label{atinfty}
+ O\left(x^{3|\Im\nu| - 3/2}\right)\, (\mbox{mod}\,2\pi), \quad x \to \infty,
\end{equation}
where 
\begin{equation}\label{nu}
\nu = -\frac{1}{4}b_+b_-, \quad |\Im \nu| < 1/2.
\end{equation}
The asymptotic parameters at infinity - the complex amplitudes $b_{\pm}$, can be, in
fact, expressed in terms of the Cauchy data $\alpha$, $\beta$,  and the  condition $|\Im \nu| < 1/2$  imposes additional restriction on them. 
The corresponding {\it connection formulae} were obtained in 1985 by V. Yu. Novokshenov \cite{Nov} (see also: \cite{IN} and \cite{Kit}), 
and they are given by the equations,
\begin{equation}\label{connect}
 e^{\pi\nu} = \frac{\sin2\pi\eta}{\sin2\pi\sigma},\quad
 b_{\pm} = -e^{\frac{\pi\nu}{2} \mp \frac{i\pi}{4}}2^{1\pm2i\nu}\frac{1}{\sqrt{2\pi}}\Gamma(1\mp i\nu)
 \frac{\sin 2\pi(\sigma\mp \eta)}{\sin2\pi\eta},
 \end{equation}
where
 \begin{equation}\label{sigma}
 \sigma:= \frac{1}{4} +\frac{i}{8}\alpha,\quad
 \eta := \frac{1}{4}+\frac{1}{4\pi}\Bigl(\beta  + \alpha\ln 8\Bigr) +\frac{i}{2\pi}
\ln\frac{\Gamma\left(\frac{1}{2}-\frac{i\alpha}{4}\right)}{\Gamma\left(\frac{1}{2}+\frac{i\alpha}{4}\right)},
\end{equation}
and $\Gamma(z)$ is Euler's Gamma-function. The open set in the space of the Cauchy data $\alpha$, $\beta$ where the both
asymptotics,  (\ref{atzero}) and (\ref{atinfty}) are valid is described by the inequalities (see also Remark \ref{square}  below),
\begin{equation}\label{set}
0<\Re \sigma <\frac{1}{2} \Longleftrightarrow |\Im\alpha| < 2,\quad \sin{2\pi\eta} \neq 0,
\quad  \left|\arg \frac{\sin2\pi\eta}{\sin2\pi\sigma}\right| < \frac{\pi}{2} \Longleftrightarrow |\Im\nu| < \frac{1}{2},
\end{equation}
where $\sigma$ and $\eta$ are understood as functions of $\alpha$ and $\beta$  defined in (\ref{sigma}). We notice that this
set contains all sufficiently small pairs $(\alpha, \beta)$, all real pairs $(\alpha, \beta)$ such that the
corresponding $\eta$ satisfies the inequality $0< \eta < \frac{1}{2}$ (mod ($1$)) and all pure imaginary pairs such that $|\alpha| < 2$. In fact, it is convenient to
take $\sigma$ and $\eta$ as the independent  parameters and think about $\alpha$ and $\beta$ as their
functions, i.e.,
\begin{equation}\label{alphasigma}
\alpha = i(2 - 8\sigma),\quad
\beta = -\pi + 4\pi\eta -i(2 - 8\sigma)\ln 8 -2i \ln\frac{\Gamma(1-2\sigma)}{\Gamma(2\sigma)},
\end{equation}
where $\sigma$, $\eta$ are the complex numbers satisfying  (\ref{set}). The expressions of the asymptotic parameters at $x =\infty$
in terms of $\sigma$ and $\eta$ have already been presented in (\ref{connect}).

The derivation of  formulae  (\ref{connect}) is based
on the 
{ \it Isomonodromy-Riemann-Hilbert Method}. We again  refer the reader to monograph
\cite{FIKN} for more details and for general references  concerning the
connection problem for Painlev\'e equations.  In the framework of the Riemann-Hilbert method, the parameters $\sigma$ and $\eta$ have
an independent important meaning as the {\it monodromy data} of the auxiliary linear system
associated with the third Painlev\'e equation. This meaning of the parameters $\sigma$ and $\eta$
plays important role in the considerations of this paper, and it will be explained in detail in the next section.  

Equations (\ref{atzero}) and (\ref{atinfty}) in turn
imply the following behavior at zero and  at infinity of the corresponding tau-function (see also \cite{Jim}),
\begin{equation}\label{15}
\tau(x) = C_0x^{-\frac{\alpha^2}{8}} ( 1 + o(1)), \quad x\to 0,
\end{equation}
and
\begin{equation}\label{16}
\tau(x) = C_{\infty}x^{\nu^2}e^{\frac{x^2}{8} +2\nu x} ( 1 + o(1)), \quad x\to \infty.
\end{equation}
In fact, one can write a complete asymptotic series for the tau-function at both critical
points whose coefficients are explicit functions of the Cauchy data $\alpha$, $\beta$ or,
equivalently, of the  monodromy data $\sigma$, $\eta$. The issue which we are concerned with is the evaluation of the ratio 
\begin{equation}
\label{ratio}
 C_{\infty}/C_0
\end{equation}
in terms of the initial data $\alpha$, $\beta$. This  can not be done  just by using  the asymptotic equations (\ref{atzero}) - (\ref{atinfty}) and the connection
formulae (\ref{connect})-(\ref{sigma}). Indeed, we are dealing  here with the ``constant of integration'' problem.
For the special one-parameter family of solutions of equation (\ref{sineP3}) related to the Ising model, this problem
was solved by C. Tracy \cite{T} in 1991.  This special family is obtained by putting
\begin{equation}\label{tracy}
\eta =0 \quad\mbox{and}\quad  \sigma\in \mathbb{R}, \quad 0<\sigma < \frac{1}{2}.
\end{equation}
in (\ref{alphasigma}). Zero value of $\eta$ is excluded from set (\ref{set}) which means that 
the behavior of this special family   at infinity is very different from the oscillatory one given in (\ref{atinfty}).  In fact, all the solutions
from this family exponentially approach $\pi$  (mod$2\pi$),
$$
u(x) - \pi  \sim  i\kappa \sqrt{\frac{2}{\pi}}x^{-1/2}e^{-x}, \quad x \to \infty, \quad \kappa = -2\cos 2\pi\sigma.
$$

In his calculations, Tracy made use of the existence of an additional Fredholm determinant representation 
of the tau-function evaluated on the family (\ref{tracy}). We are interested  in a generic, two-parameter case where there is
no such representation. A conjectural answer to the problem has been produced  in \cite{ILT} with the help of the recently discovered
by O. Gamayun, N. Iorgov, and O. Lisovyy connection of the Painlev\'e tau-functions with the Virasoro conformal blocks \cite{Lis1}, \cite{Lis2}. In this paper we prove the conjecture of  \cite{ILT}. Our main result is the following theorem.
\begin{theorem}\label{the}Let $\sigma$ and $\eta $ be the ``monodromy'' parameters  of the Painlev\'e III function $u(x)$ 
 satisfying the inequalities (\ref{set}). Then the ratio  (\ref{ratio}) is given by
the formula,
$$
\dfrac{C_\infty}{C_0}=\dfrac{2^{\frac{3}{2}}e^{-i\frac{\pi}{4}}}{\pi (G(\frac{1}{2}))^4}
(2\pi)^{{i\nu}}2^{2\nu^2+{\sigma^2}{24}-12\sigma}e^{2\pi i(\eta^2-2\sigma\eta-\sigma^2+2\eta -\sigma) }
$$
\begin{equation}\label{answer}
\times\dfrac{\Gamma(1-2\sigma)}{\Gamma(2\sigma)}\left(\dfrac{G(1+i\nu)G(1+2\sigma)G(1-2\sigma){G}(1+\sigma+\eta+\frac{1-i\nu}{2}){G}(\frac{1-i\nu}{2}-\sigma-\eta)}{{G}(1+\sigma+\eta+\frac{1+i\nu}{2}){G}(\frac{1+i\nu}{2}-\sigma-\eta)}\right)^{{2}},
\end{equation}
where  $\nu$ is defined in  (\ref{connect}) and $G(z)$ is the Barnes $G$ - function.
\end{theorem}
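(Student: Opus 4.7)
The plan is to treat $\ln\tau$ as a function not only of $x$ but of the monodromy parameters $\sigma,\eta$ as well, and to compute $\partial_\sigma\ln(C_\infty/C_0)$ and $\partial_\eta\ln(C_\infty/C_0)$ directly from the Riemann--Hilbert problem. Integrating these derivatives, and then fixing the single remaining additive constant by specialization to the Tracy case (\ref{tracy}), will yield (\ref{answer}).

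First I would recall the Isomonodromic Riemann--Hilbert formulation of (\ref{sineP3}): the solution $u(x)$ is reconstructed from a $2\times 2$ matrix $\Psi(\lambda;x)$, normalized at infinity, whose jumps on a contour of rays through $\lambda=0$ and $\lambda=\infty$ are explicit matrices depending on the monodromy data $(\sigma,\eta)$. Reading $\mathcal{H}$ off the $\lambda\to\infty$ expansion of $\Psi$ turns (\ref{taudef}) into a $1$-form $\omega_x=-\tfrac14\mathcal{H}\,dx$. Using Bertola's extension of Jimbo--Miwa--Ueno, this form can be promoted to a closed $1$-form $\omega=\omega_x+\omega_\sigma\,d\sigma+\omega_\eta\,d\eta$ on the extended parameter space $\{(x,\sigma,\eta)\}$, whose monodromy components are explicit contour integrals involving $\Psi$ and the logarithmic derivatives of the jump matrices. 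Consequently $\ln\tau(x;\sigma,\eta)$ is well defined as a line integral of $\omega$, and $\ln(C_\infty/C_0)$ is identified with the difference of the limits of $\ln\tau$ after subtracting the leading divergences coming from (\ref{15}) and (\ref{16}).

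Next I would perform two asymptotic analyses of the Riemann--Hilbert problem. The $x\to 0$ analysis is the simpler one: a suitable rescaling of $\lambda$ reduces the problem to the linear system at $x=0$, whose monodromy is encoded by $\sigma$ through (\ref{atzero}), and yields $\partial_\sigma\ln C_0$, $\partial_\eta\ln C_0$ in closed form. The $x\to\infty$ analysis requires the full Deift--Zhou nonlinear steepest-descent method: a global $g$-function removing the oscillations, the opening of lenses along the level curves of $\Im(x\lambda+\ln\lambda)$, a hypergeometric outer parametrix carrying the global connection, and local parabolic-cylinder parametrices at the two turning points. The subleading contributions to $\omega_\sigma$ and $\omega_\eta$ extracted from these parametrices deliver $\partial_\sigma\ln C_\infty$ and $\partial_\eta\ln C_\infty$. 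Combining with the $x\to 0$ output and integrating over $(\sigma,\eta)$, the Barnes-$G$ factors in (\ref{answer}) emerge by re-summing the resulting combinations of $\ln\Gamma$ via the functional equation $G(z+1)=\Gamma(z)G(z)$.

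Finally the single remaining $(\sigma,\eta)$-independent integration constant has to be pinned down by one explicit evaluation. The natural choice is a limit to the Tracy family (\ref{tracy}): $\eta\to 0$ with real $\sigma\in(0,\tfrac12)$. Since (\ref{set}) is violated there, the limit is singular in $\omega_\eta$ and must be taken after isolating the boundary contribution; the resulting value is then matched with Tracy's 1991 formula, fixing the constant. The main obstacle is the large-$x$ steepest-descent analysis: the parabolic-cylinder parametrices must be tracked with enough precision to resolve not merely the leading order (\ref{16}) but the derivatives in $(\sigma,\eta)$ of the constant term itself, and the resulting $\ln\Gamma$ expressions have to be re-summed exactly into Barnes $G$-functions with no error term. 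This is where essentially all of the computational effort, and virtually all of the opportunities for error, will be concentrated.
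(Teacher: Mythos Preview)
Your overall strategy---promote $d\ln\tau$ to a closed $1$-form on the extended parameter space via the Malgrange--Bertola construction, compute its monodromy components asymptotically at $x\to 0$ and $x\to\infty$, integrate, and fix the remaining constant by specialization---is exactly the paper's. Two concrete choices, however, diverge from the paper and make your route substantially harder.

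First, the paper avoids any fresh Deift--Zhou analysis. Its central technical step is a \emph{localization}: using the $\lambda$-equation (\ref{system}) for $\Psi$, the Malgrange--Bertola contour integral (\ref{O}) is rewritten (Proposition~\ref{prop}, equation (\ref{omegaloc})) as a finite expression in $u$, $u_x$, and the derivatives $u_p$, $u_q$ of $u$ with respect to the monodromy data. Once this is done, the small- and large-$x$ asymptotics of $\omega$ follow simply by substituting the already-known expansions (\ref{atzero}) and (\ref{inftyext}) of $u(x)$ into (\ref{omegaloc}); no $g$-function, no lenses, no parabolic-cylinder parametrices are needed. The place you flag as where ``essentially all of the computational effort will be concentrated'' is in fact bypassed entirely. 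Your steepest-descent route would eventually reproduce the same $d\ln C_0$, $d\ln C_\infty$, but at far greater cost and with the connection formulae (\ref{connect}) being rederived rather than used.

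Second, your proposal to fix the integration constant by limiting to the Tracy family $\eta\to 0$ is not merely inconvenient but genuinely problematic: from (\ref{connect}) one has $e^{\pi\nu}=\sin 2\pi\eta/\sin 2\pi\sigma\to 0$, so $\nu\to -\infty$, the oscillatory asymptotics (\ref{atinfty}) collapse to exponential decay, and both sides of (\ref{answer}) degenerate. Extracting the constant from this limit would require a separate asymptotic matching. The paper instead evaluates at the trivial solution $u\equiv 0$, i.e.\ $\sigma=\eta=\tfrac14$, $\nu=0$, which lies strictly inside the domain (\ref{set}) and gives $\tau=\mathrm{const}\cdot e^{x^2/8}$, hence $C_\infty/C_0=1$ immediately.
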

It should be noticed that in \cite{ILT} a slightly different definition of the tau-function is used. The exact relation
of the constant (\ref{answer}) and the one conjectured in \cite{ILT} is discussed in the last section of the paper.

Our proof of Theorem \ref{the} is not based on   the conformal block connection. We use the Riemann-Hilbert representation
of the third Painlev\'e transcendent and the Malgrange-Bertola extension of the Jimbo-Miwa-Ueno definition of the tau-function. 

In the course of our proof, we also confirm one of the key observations
 of  \cite{ILT} that  ratio (\ref{answer}) determines the generating function of the canonical
 transformation of the canonical variables determined by the initial data $(\alpha, \beta)$ to the
 canonical variables determined by the  asymptotic data $(b_+, b_-)$ (see the end
 of Section 5 for more detail). In fact, this Hamiltonian interpretation of
 the pre-factors in the asymptotics of the Painlev\'e  tau-functions was first suggested in the work \cite{Lis3} of N.~Iorgov, O.~Lisovyy and Yu.~Tykhyy.

The  evaluation of  ratio (\ref{ratio}), which we have made rigorous in this paper,   is 
only one of  a series of highly nontrivial  predictions and already established facts which
came  from the remarkable discovery of Gamayun, Iorgov, and Lisovyy. These other
predictions and results, including the  key ingredient of the approach of \cite{Lis1} and \cite{Lis2},
which is the explicit conformal block series  representations for  the Painlev\'e tau-functions,
do not yet have their understanding in the framework of the Riemann-Hilbert method. 

We shall start the proof of Theorem \ref{the} with the reminding of the Isomonodromy-Riemann-Hilbert formalism for the Painlev\'e equation (\ref{sineP3})
(for more detail see, e.g., \cite{FIKN}). 
\section{The Riemann-Hilbert Representation of the Solutions of the Sine-Gordon/Painlev\'e III Equation}
The Riemann-Hilbert problem associated with equation (\ref{sineP3}) is posed  on the oriented  contour $\Gamma$
depicted in Figure \ref{fig1}, and it consists in the finding of a $2\times 2$ matrix-valued function $\Psi(\lambda)$ which satisfies the following properties.
\begin{figure}[h]
\centering
{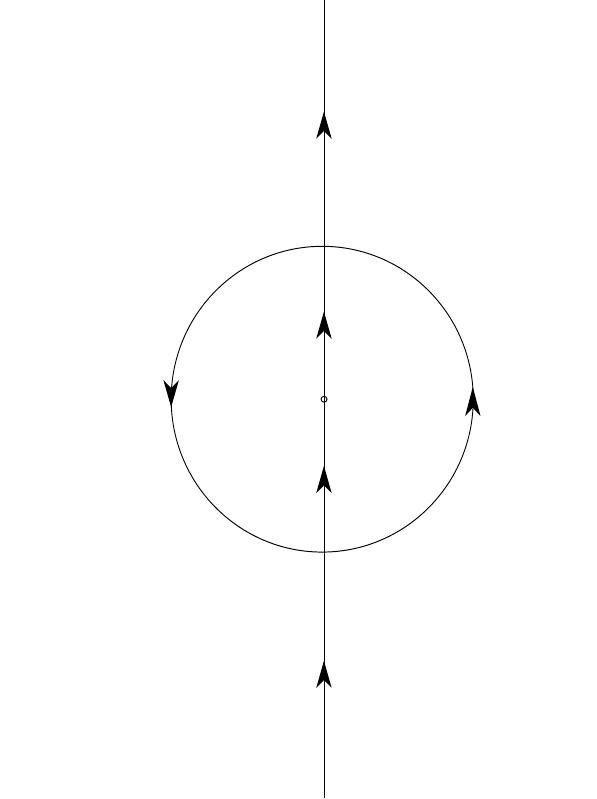}
\caption{Contour $\Gamma$}
\label{fig1}
\end{figure}
\begin{itemize}
\item The function $\Psi(\lambda)$ is analytic on $\mathbb{C} \setminus\{\Gamma\}$, it has continuous $\pm$ - limits on the contour $\Gamma$, and these limits satisfy jump condition $\Psi_+(\lambda)=\Psi_{-}(\lambda)S(\lambda)$. Here $``+"$ denotes the boundary values from the left side of the contour 
and $``-"$ denotes the boundary values from the right side of the contour.  Jump matrix $S(\lambda)$ is piecewise constant,
its different components are indicated in Figure \ref{fig1},  and  they are  given by the equations,
\begin{equation}\label{RHdata1}
S_1^{(\infty)}=S_2^{(0)}=\left(
\begin{array}{cc}
1&0\\
p+q&1\\
\end{array}
\right),\quad S_2^{(\infty)}=S_1^{(0)}=\left(
\begin{array}{cc}
1&p+q\\
0&1\\
\end{array}
\right),
\end{equation}
\begin{equation}\label{RHdata2}
E=\dfrac{1}{\sqrt{1+pq}}\left(
\begin{array}{cc}
1&p\\
-q&1\\
\end{array}
\right),\quad p, q \in \mathbb{C}, \quad 1+pq \neq 0,\quad \sigma_1=\left(
\begin{array}{cc}
0&1\\
1&0\\
\end{array}
\right). 
\end{equation}
\item The function $\Psi(\lambda)$ satisfies the following conditions at zero and infinity
\begin{equation}\label{pzero}
\Psi(\lambda)=P_0(I+M_1^{(0)}\lambda+O(\lambda^2))e^{-\frac{i}{\lambda}\sigma_3},\quad \lambda\to 0,\nonumber
\end{equation}
\begin{equation}\label{pinfin}
\Psi(\lambda)=\Bigl(I+\dfrac{M_1^{(\infty)}}{\lambda}+O\Bigl(\dfrac{1}{\lambda^2}\Bigr)\Bigr)e^{-\frac{ix^2\lambda}{16}\sigma_3},\quad \lambda\to \infty,
\end{equation}
where $P_0, M_1^{(0)}, M_1^{(\infty)}$ here are some constant in $\lambda$ matrices and
$$
\sigma_3=\left(
\begin{array}{cc}
1&0\\
0&-1\\
\end{array}
\right).
$$
\end{itemize}
The Riemann-Hilbert problem is uniquely and meromorphically in $x$ solvable for all  \linebreak $p, q \in \mathbb{C}, \quad 1+pq \neq 0$ \cite{Niles} and the
corresponding  solution $u(x) \equiv u(x; p, q)$ of the third Painlev\'e equation (\ref{sineP3}) is given by the formula,
$$
u(x) = 2\arccos (P_0)_{11}.
$$
In fact, the following equation takes place,
\begin{equation}\label{P0def}
P_0=\left(\begin{array}{cc}
\mathrm{cos}(\frac{u}{2})&-i\mathrm{sin}(\frac{u}{2})\\
-i\mathrm{sin}(\frac{u}{2})& \mathrm{cos}(\frac{u}{2})
\end{array}\right) \equiv e^{-\frac{iu\sigma_1}{2}}.
\end{equation}
\begin{remark} The above Riemann-Hilbert setting corresponds to the generic solutions of (\ref{sineP3}). There is one parameter family
of a separatrix solution which is characterized by the following Riemann-Hilbert data
\begin{equation}\label{RHdata3}
S_1^{(\infty)}=S_2^{(0)}=\left(
\begin{array}{cc}
1&0\\
\kappa&1\\
\end{array}
\right),\quad S_2^{(\infty)}=S_1^{(0)}=\left(
\begin{array}{cc}
1&\kappa\\
0&1\\
\end{array}
\right),\quad 
E=\pm i\left(
\begin{array}{cc}
0&1\\
1&0\\
\end{array}
\right),\quad \kappa \in \mathbb{C}.\nonumber
\end{equation}
This is the family which includes the McCoy-Tracy-Wu solution (\ref{tracy}) and which is not considered 
in this paper. As it has already been mentioned, the constant problem for this family was solved in \cite{T}.
\end{remark}
The parameters $p,q\in\mathbb{C}$  in (\ref{RHdata1}), (\ref{RHdata2}) are connected with the parameters of asymptotic of $u(x)$ via
$$ p:=-i\dfrac{\sin 2\pi (\sigma + \eta)}{\sin 2\pi\eta},\quad q:=i\dfrac{\sin 2\pi (\sigma - \eta)}{\sin 2\pi\eta}.$$
Conditions $|\Im \alpha|<2$ and $|\Im \nu|<1/2$ can be rewritten in terms of $p$ and $q$ as 
\begin{equation}\label{condpq}
p+q\notin (-i\infty, -2i]\cup[2i,+i\infty),\quad\mbox{and}\quad  pq\notin (-\infty,-1],
\end{equation}
respectively.
\begin{remark}\label{square}  Conditions (\ref{condpq}) are the conditions which appear during the asymptotic
analysis of the Riemann-Hilbert problem. The asymptotic parameter $\nu$ is related to $p$, $q$ according
to the equation,
$$
\nu = -\frac{1}{2\pi}\ln(1 + pq).
$$
Also, 
$$
1+pq = \frac{\sin^{2}2\pi\sigma}{\sin^{2}2\pi\eta}.
$$
We restrict ourselves in (\ref{set}) to the inequality $\left|\arg \frac{\sin2\pi\eta}{\sin2\pi\sigma}\right| <\pi/2$,
instead of the inequality $\left|\arg \left(\frac{\sin2\pi\eta}{\sin2\pi\sigma}\right)^2\right| <\pi$ by a technical reason.
This means that we actually analyze one of the components of the full set of the
Cauchy data corresponding to the generic asymptotic behavior (\ref{atzero}) and (\ref{atinfty}).
The another component is defined by the condition,
$$\left|\arg \frac{\sin2\pi\eta}{\sin2\pi\sigma} - \pi \right| <\pi/2,$$
which in turn implies the following change in formulae (\ref{connect}),
$$
 e^{\pi\nu} = -\frac{\sin2\pi\eta}{\sin2\pi\sigma}.
 $$
The analysis presented in this paper can be easily extended on this component of initial data as well. Actually, the $\tau$-function does not change if we add $2\pi i $ to the function $u(x)$. But $\eta$ is shifted by $\frac{1}{2}$. So such change of variable allows us to go from one component to another.
\end{remark}
Function $\Psi(\lambda)$ satisfies system of linear ordinary differential equations
\begin{equation}\label{system}
\dfrac{d\Psi}{d \lambda}=A(\lambda)\Psi(\lambda),
\end{equation}
\[
\dfrac{d\Psi}{d x}=U(\lambda)\Psi(\lambda),
\]
\begin{equation}\label{A}
A(\lambda)=-\dfrac{ix^2\sigma_3}{16}-\frac{ix u_x\sigma_1}{4\lambda}+\dfrac{P_0(i\sigma_3)P^{-1}_0}{\lambda^2},
\end{equation}
\[
U(\lambda)=-\dfrac{i\lambda x \sigma_3}{8}-\dfrac{i u_x \sigma_1}{2}.
\]
Equation \eqref{sineP3} is the compatibility condition for this system and it describes isomonodromic deformations of the system \eqref{system}. From this point of view $\sigma$ and $\eta$ play role of the  monodromy data. 

We complete this overview  of the Riemann-Hilbert formalism for equation (\ref{sineP3}) by presenting
the general alternative definition of the  Jimbo-Miwa-Ueno tau-function  in terms of the solution $\Psi(\lambda)$ 
of the Riemann-Hilbert problem.

Define
$$
\hat{\Psi}^{(\infty)}(\lambda):=\Psi(\lambda)e^{\frac{ix^2\lambda}{16}\sigma_3},\quad |\lambda|>R.
$$
Then, according to \cite{JM} the equation, 
\begin{equation}\label{taudef2}
\omega_{JMU}=-\mathop{res}\limits_{\lambda=\infty} \mathrm{Tr}\Bigl((\hat{\Psi}
^{(\infty)}(\lambda))^{-1}({\hat{\Psi}^{(\infty)}}(\lambda))'(-\frac{i\lambda x}
{8}\sigma_3)\Bigr)dx,
\end{equation} 
defines the differential form whose antiderivative is the logarithm of tau-function.
Actually, from \eqref{pinfin} we have that
\begin{equation}\label{phin}
\hat{\Psi}^{(\infty)}(\lambda)=I+\dfrac{M_1^{(\infty)}}{\lambda}+O\Bigl(\dfrac{1}{\lambda^2}\Bigr).
\end{equation}

 Substituting \eqref{pinfin} to the equation \eqref{system}, one can express $M_1^{(\infty)}$ in terms of $u(x)$ (see \cite{FIKN},\cite{Niles}).
 \begin{equation}\label{m1in}
M_1^{(\infty)}=-\frac{2ixu_x}{x^2}\sigma_2-i\left(\cos u -\frac{u_x^2}{2}\right)\sigma_3,
\quad
\sigma_2=\left(
\begin{array}{cc}
0&-i\\
i&0\\
\end{array}
\right). 
\end{equation}
Substituting, in turn,  \eqref{phin}-\eqref{m1in} into \eqref{taudef2} yields the relation,  $\omega_{JMU} = d\ln\tau(x)$,
 where $d\ln\tau$ is defined in (\ref{taudef}). In other words, the tau-function  can be alternatively defined as 
\begin{equation}\label{taudef3}
\tau \equiv \tau_{JMU}(x,p,q) = e^{\intop\omega_{JMU}}.\nonumber
\end{equation}
So defined the tau-function is unique up to multiplication by a constant depending on $p$ and $q$. 
The key fact for us is that, following \cite{B}, it is possible to extend Jimbo-Miwa-Ueno differential form (\ref{taudef2}) on vector fields in $p$ and $q$ in such a way, that it will remains a closed form. Such extension will allow us to define tau-function already up to a constant, which does not depend on $p$ and $q$.

\section{Malgrange-Bertola Differential Form } 
 In this section we basically repeat the calculations and
the results of Section 5.1 of paper \cite{B} adjusting them to our special case.

Put $Y(\lambda)=\Psi(\lambda)e^{(\frac{ix^2\lambda}{16}+\frac{i}{\lambda})\sigma_3}$. Denote $G(\lambda)$ the jump matrix for $Y(\lambda)$. 
Following \cite{B},\cite{Mal}, we define   the {\it Malgrange-Bertola differential form} by the equation
\begin{equation}\label{O}
\omega_{MB}[\partial]=\intop_{{\Gamma}} \mathrm{Tr}(Y_{-}^{-1}Y'_{-}(\partial G)G^{-1} \dfrac{d\lambda}{2\pi i}.
\end{equation}
Here $\partial$ denotes the vector field in the space of parameters $x,p,q$, and the prime denotes derivative with respect to $\lambda$. This differential form was introduced originally by B. Malgrange in \cite{Mal} for the case
when the contour $\Gamma$ is a circle. M. Bertola in \cite{B} has extended the Malgrange's definition
to an arbitrary Riemann-Hilbert setting.

Let us establish the connection of this form with Jimbo-Miwa-Ueno form. For the case of general Riemann-Hilbert problem, 
the analog of this Lemma was proven in \cite{B}.
\begin{lemma}\label{l1}The Malgrange-Bertola differential form, evaluated on the vector fields in parameter $x$, is equal to the Jimbo-Miwa-Ueno form up to a term, depending only on $G(\lambda)$.
\begin{equation}
\label{l1e}
\omega_{MB}[\partial_x]=\omega_{JMU}[\partial_x]-\left[\intop_{{\Gamma}}\mathrm{Tr}\Bigl(G^{-1}G'\dfrac{ix\lambda}{8}\sigma_3\Bigr) \dfrac{d\lambda}{2\pi i}\right]-\frac{x}{4}.
\end{equation}
\end{lemma}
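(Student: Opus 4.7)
The plan is to adapt Bertola's general scheme \cite{B} to the present setup. The algebraic identity one wants to uncover is that, after integrating by parts in the jump relation, the Malgrange-Bertola integrand decomposes into (i) a ``jump of $\mathrm{Tr}(Y^{-1}Y' X)$'' across $\Gamma$, which by the residue theorem reproduces $\omega_{JMU}[\partial_x]$ plus a constant correction, and (ii) a pure jump-matrix piece matching exactly the bracketed integral in \eqref{l1e}. The correction $-\frac{x}{4}$ will emerge as an extra residue contribution at infinity coming from the $e^{i\sigma_3/\lambda}$ factor that is built into $Y$ but absent from $\hat\Psi^{(\infty)}$.

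First I would compute $\partial_x G$. Since $S$ is independent of $x$ and $T := \exp\!\bigl((\frac{ix^2\lambda}{16}+\frac{i}{\lambda})\sigma_3\bigr)$ commutes with $\sigma_3$, a direct computation gives $\partial_x G = [G,X]$ with $X := \frac{ix\lambda}{8}\sigma_3$, so $(\partial_x G)G^{-1} = GXG^{-1}-X$. Substituting into \eqref{O}, using cyclicity of the trace on the $GXG^{-1}$ term, and invoking the jump relation $Y_+ = Y_- G$ in the form $G^{-1}Y_-^{-1}Y_-'G = Y_+^{-1}Y_+' - G^{-1}G'$, the integrand rewrites as
\[
\mathrm{Tr}\!\bigl(Y_-^{-1}Y_-'(\partial_x G)G^{-1}\bigr) = \mathrm{Tr}(Y_+^{-1}Y_+' X) - \mathrm{Tr}(Y_-^{-1}Y_-' X) - \mathrm{Tr}(G^{-1}G' X).
\]
Integrating over $\Gamma$ then expresses $\omega_{MB}[\partial_x]$ as $I_1 - I_2$, where $I_2$ is precisely the bracketed integral in \eqref{l1e} and $I_1 = \int_\Gamma (F_+ - F_-)\frac{d\lambda}{2\pi i}$ with $F(\lambda) := \mathrm{Tr}(Y^{-1}Y' X)$.

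Because $Y$ is analytic and invertible on $\mathbb{C}\setminus\Gamma$ and the linear system \eqref{system} is rational with singularities only at $0$ and $\infty$, $F$ is meromorphic on the complement of $\Gamma$ with poles only at those two points. The residue theorem applied region-by-region then gives $I_1 = \mathrm{res}_0 F + \mathrm{res}_\infty F$. Near $\lambda=0$, the factorization $Y = \hat\Psi^{(0)}e^{\frac{ix^2\lambda}{16}\sigma_3}$ (with $\hat\Psi^{(0)}$ holomorphic and invertible at $0$) shows that $Y^{-1}Y'$ is regular there, and since $X$ vanishes linearly at $0$ we obtain $F=O(\lambda)$, so $\mathrm{res}_0 F = 0$. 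Near $\lambda=\infty$, the factorization $Y = \hat\Psi^{(\infty)}e^{\frac{i}{\lambda}\sigma_3}$, together with the fact that $\sigma_3$ (and hence $X$) commutes with $e^{\frac{i}{\lambda}\sigma_3}$, yields after cyclic manipulation
\[
F(\lambda) = \mathrm{Tr}\!\bigl((\hat\Psi^{(\infty)})^{-1}(\hat\Psi^{(\infty)})' X\bigr) + \frac{x}{4\lambda},
\]
the extra $\frac{x}{4\lambda}$ coming from $\mathrm{Tr}\!\bigl((-\frac{i}{\lambda^2}\sigma_3)X\bigr)$. The first term has residue at infinity equal to $\omega_{JMU}[\partial_x]$ by the defining formula \eqref{taudef2}, while the second contributes $-\frac{x}{4}$. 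Hence $I_1 = \omega_{JMU}[\partial_x] - \frac{x}{4}$, and substituting recovers exactly \eqref{l1e}.

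The key technical checks are the bookkeeping of the $-\frac{x}{4}$ correction (which ultimately reflects the mismatch between the irregular-singularity structures of $Y$ and of $\hat\Psi^{(\infty)}$ at $\lambda=\infty$) and the verification that $\mathrm{res}_0 F$ vanishes; with these in hand, the rest is standard cyclic-trace manipulation in the spirit of \cite{B}.
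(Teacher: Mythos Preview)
Your proof is correct and follows essentially the same route as the paper. Both arguments first compute $(\partial_x G)G^{-1}=GXG^{-1}-X$ with $X=\frac{ix\lambda}{8}\sigma_3$, then use the jump relation to rewrite the Malgrange--Bertola integrand as $\mathrm{Tr}(Y_+^{-1}Y_+'X)-\mathrm{Tr}(Y_-^{-1}Y_-'X)-\mathrm{Tr}(G^{-1}G'X)$, and finally evaluate the jump integral of $F=\mathrm{Tr}(Y^{-1}Y'X)$ by residues, picking up $\omega_{JMU}[\partial_x]-\tfrac{x}{4}$ at infinity and nothing at the origin. The only difference is presentational: the paper decomposes $\Gamma$ explicitly into the pieces $\Gamma_1,\dots,\Gamma_6$ and groups the boundary values region by region, whereas you invoke the residue theorem more abstractly.
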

\begin{proof}

First, we have
$$
G(\lambda)=e^{-(\frac{ix^2\lambda}{16}+\frac{i}{\lambda})\sigma_3}S(\lambda)e^{(\frac{ix^2\lambda}{16}+\frac{i}{\lambda})\sigma_3},
$$
where $S(\lambda)$ is the jump matrix for $\Psi(\lambda)$. Hence,
\begin{equation}\label{dg}
\partial_x G(\lambda) G^{-1}(\lambda)=-\dfrac{ix\lambda}{8}\sigma_3+ \dfrac{ix\lambda}{8}G(\lambda)\sigma_3 G^{-1}(\lambda).
\end{equation} 
Also, we have that
\begin{equation}\label{y}
Y_{+}(\lambda)=Y_{-}(\lambda)G(\lambda),
\end{equation}
and, by $\lambda$ - differentiation, 
\begin{equation}\label{dy}
Y_{+}'(\lambda)=Y_{-}'
(\lambda)G(\lambda)+Y_{-}(\lambda)G'(\lambda).
\end{equation}

Substituting \eqref{dg},\eqref{y},\eqref{dy} in \eqref{O} we get
$$
\omega_{MB}[\partial_x]=\left[\intop_{{\Gamma}} \mathrm{Tr}\Bigl((Y_{+}^{-1}Y'_{+}-Y_{-}^{-1}Y'_{-})\dfrac{ix\lambda}{8}\sigma_3-G^{-1}G'\dfrac{ix\lambda}{8}\sigma_3\Bigr) \dfrac{d\lambda}{2\pi i}\right].
$$

Let us introduce the  notation for the parts of the contour $\Gamma$ as it is indicated in Figure \ref{fig2}.
\begin{figure}[h]
\centering
{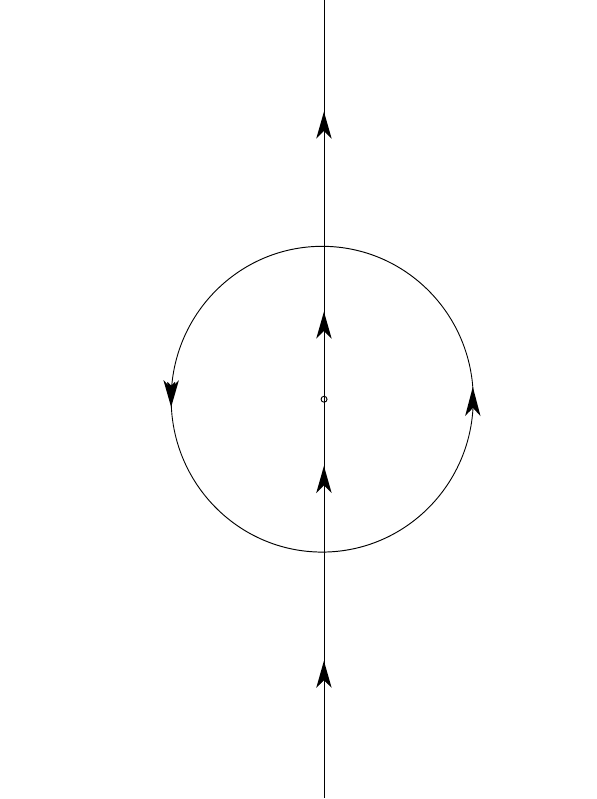}
\caption{Parts of the contour ${\Gamma}$}
\label{fig2}
\end{figure}
\newpage
We have 
$$
\intop_{\Gamma_5\cup\Gamma_6\cup\Gamma_4} \mathrm{Tr}\Bigl(Y_{+}^{-1}Y'_{+}\dfrac{ix\lambda}{8}\sigma_3\Bigr) \dfrac{d\lambda}{2\pi i}=0,
$$
$$
\intop_{\Gamma_2} \mathrm{Tr}\Bigl(Y_{+}^{-1}Y'_{+}\dfrac{ix\lambda}{8}\sigma_3\Bigr) \dfrac{d\lambda}{2\pi i}-\intop_{\Gamma_5\cup\Gamma_6} \mathrm{Tr}\Bigl(Y_{-}^{-1}Y'_{-}\dfrac{ix\lambda}{8}\sigma_3\Bigr) \dfrac{d\lambda}{2\pi i}=0.
$$
We also have 
$$
Y(\lambda)=\hat{\Psi}^{(\infty)}(\lambda)e^{\frac{i}{\lambda}\sigma_3},\quad |\lambda|>R.
$$
Hence,
$$
\intop_{\Gamma_1\cup\Gamma_3} \mathrm{Tr}\Bigl(Y_{+}^{-1}Y'_{+}\dfrac{ix\lambda}{8}\sigma_3\Bigr) \dfrac{d\lambda}{2\pi i}-\intop_{\Gamma_1\cup\Gamma_2\cup\Gamma_3\cup\Gamma_4} \mathrm{Tr}\Bigl(Y_{-}^{-1}Y'_{-}\dfrac{ix\lambda}{8}\sigma_3\Bigr) \dfrac{d\lambda}{2\pi i}
$$
$$
=\mathop{\mathrm{res}}\limits_{\lambda=\infty}\mathrm{Tr}\Bigl(e^{-\frac{i}{\lambda}\sigma_3}(\hat{\Psi}^{(\infty)}(\lambda))^{-1}(\hat{\Psi}^{(\infty)}(\lambda))'e^{\frac{i}{\lambda}\sigma_3}\dfrac{ix\lambda}{8}\sigma_3-e^{-\frac{i}{\lambda}\sigma_3}(\hat{\Psi}^{(\infty)}(\lambda))^{-1}(\hat{\Psi}^{(\infty)}(\lambda)){\frac{i}{\lambda^2}\sigma_3}e^{\frac{i}{\lambda}\sigma_3}\dfrac{ix\lambda}{8}\sigma_3 \Bigr)
$$
$$
=\mathop{\mathrm{res}}\limits_{\lambda=\infty}\mathrm{Tr}\Bigl((\hat{\Psi}^{(\infty)}(\lambda))^{-1}(\hat{\Psi}^{(\infty)}(\lambda))'\dfrac{ix\lambda}{8}\sigma_3\Bigr)-\frac{x}{4},
$$
and \eqref{l1e} follows.
\end{proof}
This lemma means, that Malgrange-Bertola form is indeed a good candidate for extension of Jimbo-Miwa-Ueno form. 
However,  there is an additional term, depending only on $G(\lambda)$.
One can, following again \cite{B}, cancel it considering the {\it modified Malgrange-Bertola} form $\omega=\omega_{MB}+\theta$, where
$$\theta[\partial]=\dfrac{1}{2}\intop_{\hat{\Gamma}}\mathrm{Tr}(G'G^{-1}(\partial G)G^{-1}) \dfrac{d\lambda}{2\pi i}.$$
In the notations of \cite{B} $\omega$ is the form $\Omega$ from  Definition 2.2 of \cite{B}.

We have $\omega[\partial_x]=\omega_{JMU}[\partial_x]-\frac{x}{4}$. Indeed,
$$
G^{-1}G'\dfrac{ix\lambda}{8}\sigma_3=\dfrac{ix\lambda}{8}\left(\dfrac{ix^2}{16}-\dfrac{i}{\lambda^2}\right)(I-G^{-1}\sigma_3 G \sigma_3),
$$
$$
\mathrm{Tr}(G'G^{-1}(\partial_x G)G^{-1})=\dfrac{ix\lambda}{8}\left(\dfrac{ix^2}{16}-\dfrac{i}{\lambda^2}\right)2\mathrm{Tr}(I-G^{-1}\sigma_3 G \sigma_3),
$$
and the additional term depending only on $G(\lambda)$ cancels. 

In the next section we will express the form $\omega$ in terms of the coefficients of the
asymptotic expansions of $Y(\lambda)$ at $\lambda = 0$ and  at $\lambda = \infty$.
We call  this expression a ``localization'' of the original integral formula (\ref{O}) for 
the Malgrange-Bertola form. This localized version will simplify dramatically the
further analysis of the form $\omega$. 
\section{Localization}
Let us  introduce the function 
$$
{\Theta}(\lambda)=\partial Y(\lambda)Y(\lambda)^{-1},
$$
where $\partial$ means the differentiation with respect to  one of the three
parameters, $x$, $p$, or $q$. The $\partial$-version of equation \eqref{dy}
reads,
$$
\partial Y_+(\lambda) = \partial Y_-(\lambda)G(\lambda) + Y_-(\lambda)\partial G(\lambda).
$$
Expressing $G(\lambda)$ from \eqref{y} as $G(\lambda) = Y^{-1}_-(\lambda)Y_+(\lambda)$ we rewrite the last
equations as
$$
\partial Y_+(\lambda)Y^{-1}_+(\lambda) = \partial Y_-(\lambda)Y^{-1}_-(\lambda)+ Y_-(\lambda)\partial G(\lambda)Y^{-1}_+(\lambda),
$$
or as
\begin{equation}\label{jumppartial}
\partial Y_+(\lambda)Y^{-1}_+(\lambda) - \partial Y_-(\lambda)Y^{-1}_-(\lambda) =Y_-(\lambda)\partial G(\lambda)Y^{-1}_+(\lambda).
\end{equation}
The Sokhotski-Plemelj formula would then imply (cf. Lemma 2.1 of \cite{B}) that
\begin{equation}\label{f1}
\Theta(\lambda)=\intop_{\Gamma}\dfrac{Y_{-}(y)\partial G(y)Y_{+}^{-1}(y)}{y-\lambda}\dfrac{dy}{2\pi i}.
\end{equation}
Substituting $Y(\lambda)=\Psi(\lambda)e^{(\frac{ix^2\lambda}{16}+\frac{i}{\lambda})\sigma_3}$ in \eqref{system}, we have
\begin{equation}\label{ysystem}
Y'(\lambda)=A(\lambda)Y(\lambda)+\Bigl(\dfrac{ix^2}{16}-\dfrac{i}{\lambda^2}\Bigr)Y(\lambda)\sigma_3,
\end{equation}
and
$$
\omega_{MB}[\partial]=\intop_{{\Gamma}} \mathrm{Tr}(Y_{-}^{-1}Y'_{-}(\partial G)G^{-1}) \dfrac{d\lambda}{2\pi i}=\intop_{{\Gamma}} \Bigl(\dfrac{ix^2}{16}-\dfrac{i}{\lambda^2}\Bigr)\mathrm{Tr}(\sigma_3(\partial G)G^{-1}) \dfrac{d\lambda}{2\pi i}
$$
\begin{equation}\label{loc1}
+\intop_{{\Gamma}} \mathrm{Tr}(AY_{-}(\partial G)Y_{+}^{-1}) \dfrac{d\lambda}{2\pi i}.
\end{equation}
We introduce notation
\begin{equation}\label{yzero}
Y(\lambda)=P_0(I+\stackrel{\circ}{m_1}\lambda+O(\lambda^2)), \quad \lambda\to 0,
\end{equation}
\begin{equation}\label{yinfin}
Y(\lambda)=\Bigl(I+\dfrac{m_1^{(\infty)}}{\lambda}+O\Bigl(\dfrac{1}{\lambda^2}\Bigr)\Bigr), \quad \lambda\to \infty.
\end{equation}
Substituting these expressions for $Y(\lambda)$ to the definition of $\Theta(\lambda)$ we get
$$
\Theta(\lambda)=(\partial P_0)P_0^{-1}+P_0(\partial\stackrel{\circ}{m_1})P_0^{-1}\lambda+O(\lambda^2), \quad \lambda\to 0,
$$
$$
\Theta(\lambda)=\dfrac{\partial m_1^{(\infty)}}{\lambda}+O\Bigl(\dfrac{1}{\lambda^2}\Bigr), \quad \lambda\to \infty.
$$
Comparing these formulae with \eqref{f1} we arrive at the relations
\begin{equation}\label{loc2}
\intop_{\Gamma}{Y_{-}(y)\partial G(y)Y_{+}^{-1}(y)}\dfrac{dy}{2\pi i}=-\partial m_1^{(\infty)},
\end{equation}
\begin{equation}\label{loc3}
\intop_{\Gamma}\dfrac{Y_{-}(y)\partial G(y)Y_{+}^{-1}(y)}{y}\dfrac{dy}{2\pi i}=(\partial P_0)P_0^{-1},
\end{equation}
\begin{equation}\label{loc4}
\intop_{\Gamma}\dfrac{Y_{-}(y)\partial G(y)Y_{+}^{-1}(y)}{y^2}\dfrac{dy}{2\pi i}=P_0(\partial\stackrel{\circ}{m_1})P_0^{-1}.
\end{equation}
Let us look now at the last integral in equation (\ref{loc1}). Putting in it formula (\ref{A}) for $A(\lambda)$, we will see
that this integral can be re-written as 
$$
\intop_{{\Gamma}} \mathrm{Tr}\Bigl(AY_{-}(\partial G)Y_{+}^{-1}\Bigr) \dfrac{d\lambda}{2\pi i}
= -\frac{ix^2}{16}\intop_{{\Gamma}} \mathrm{Tr}\Bigl(\sigma_3Y_{-}(\partial G)Y_{+}^{-1}\Bigr) \dfrac{d\lambda}{2\pi i}
$$
$$
-\frac{ixu_x}{4}\intop_{{\Gamma}} \mathrm{Tr}\left(\sigma_1\frac{Y_{-}(\partial G)Y_{+}^{-1}}{\lambda}\right) \dfrac{d\lambda}{2\pi i}
+ \intop_{{\Gamma}} \mathrm{Tr}\left(P_0(i\sigma_3)P^{-1}_0\frac{Y_{-}(\partial G)Y_{+}^{-1}}{\lambda^2}\right) \dfrac{d\lambda}{2\pi i}
$$
$$
= -\frac{ix^2}{16}\mathrm{Tr}\Bigl(\sigma_3\intop_{{\Gamma}} Y_{-}(\partial G)Y_{+}^{-1} \dfrac{d\lambda}{2\pi i}\Bigr)
$$
$$
-\frac{ixu_x}{4} \mathrm{Tr}\left(\sigma_1\intop_{{\Gamma}}\frac{Y_{-}(\partial G)Y_{+}^{-1}}{\lambda} \dfrac{d\lambda}{2\pi i}\right)
+ \mathrm{Tr}\left(P_0(i\sigma_3)P^{-1}_0\intop_{{\Gamma}} \frac{Y_{-}(\partial G)Y_{+}^{-1}}{\lambda^2} \dfrac{d\lambda}{2\pi i}\right).
$$
The last equation, with the help of (\ref{loc2}) - (\ref{loc4}), is transformed into the {\it localized} formula,
\begin{equation}\label{loc5}
\intop_{{\Gamma}} \mathrm{Tr}\Bigl(AY_{-}(\partial G)Y_{+}^{-1}\Bigr) \dfrac{d\lambda}{2\pi i}=
\frac{ix^2}{16}\mathrm{Tr}\Bigl(\sigma_3\partial m^{(\infty)}_1\Bigr)
-\frac{ixu_x}{4} \mathrm{Tr}\left(\sigma_1(\partial P_0)P^{-1}_0\right)
+ i \mathrm{Tr}\left(\sigma_3\partial {\stackrel{\circ}{m_1}}\right).
\end{equation}
Substituting the  derivative of $G$ with respect to $\lambda$ in the formula for $\theta$, we have that
$$
\theta[\partial]=\dfrac{1}{2}\intop_{\Gamma}\Bigl(\dfrac{ix^2}{16}-\dfrac{i}{\lambda^2}\Bigr)\mathrm{Tr}\Bigl(\sigma_3(G^{-1}\partial G-(\partial G) G^{-1})\Bigr)\dfrac{d\lambda}{2\pi i}.
$$
Together with (\ref{loc5}) this  gives us the following formula for $\omega$
$$
\omega[\partial]=\dfrac{1}{2}\intop_{{\Gamma}} \left(\dfrac{ix^2}{16}-\dfrac{i}{\lambda^2}\right)\mathrm{Tr}\Bigl(\sigma_3((\partial G)G^{-1}+G^{-1}\partial G)\Bigr) \dfrac{d\lambda}{2\pi i}
$$
$$
+ \dfrac{ix^2}{16}\mathrm{Tr}\left(\sigma_3\partial m_1^{(\infty)}\right)-\frac{ix u_x}{4}\mathrm{Tr}\left(\sigma_1(\partial P_0)P_0^{-1}\right)
+i\mathrm{Tr}\left(\sigma_3\partial\stackrel{\circ}{m_1}\right).
$$
One can check directly  that
$$
\mathrm{Tr}\Bigl(\sigma_3((\partial G)G^{-1}+G^{-1}(\partial G))\Bigr)\equiv 0.
$$
Therefore, we finally have that
\begin{equation}\label{loc6}
\omega[\partial] = \dfrac{ix^2}{16}\mathrm{Tr}\left(\sigma_3\partial m_1^{(\infty)}\right)-\frac{ix u_x}{4}\mathrm{Tr}\left(\sigma_1(\partial P_0)P_0^{-1}\right)
+i\mathrm{Tr}\left(\sigma_3\partial\stackrel{\circ}{m_1}\right).
\end{equation}
Substituting \eqref{yzero}, \eqref{yinfin} to the equation \eqref{ysystem}, one can express the coefficients of asymptotics of $Y(\lambda)$
at $\lambda =0$ and at $\lambda=\infty$ in terms of $u$. In particular, one gets (cf. \cite{Niles}),
$$
m_1^{(\infty)}=-\frac{2ixu_x}{x^2}\sigma_2-i\left(\cos u -1-\frac{u_x^2}{2}\right)\sigma_3,
$$
$$
\stackrel{\circ}{m_1}=\frac{ixu_x}{8}\sigma_2-i\left(\frac{x^2}{16}(\cos u -1)-\frac{x^2u_x^2}{32}\right)\sigma_3.
$$
Inserting  these equations together with  formula (\ref{P0def})  in (\ref{loc6})  and using also the fact, that $u(x)$ satisfies \eqref{sineP3} 
we transform equation (\ref{loc6}) into the  final expression for the form $\omega$ in terms of 
$u$ and its derivatives with respect to $x$, $p$ and $q$.
\begin{proposition}\label{prop} The modified Malgrange-Bertola differential form $\omega$ admits the
following representation 
$$
\omega=\left(-\frac{x u_x^2}{8}+\frac{x}{4}(\cos u-1)\right)dx-\left(\frac{x^2}{4}u_p \sin u+\frac{x^2}{4}u_x u_{px}+\dfrac{x u_x u_p}{4}\right)dp
$$
\begin{equation}\label{omegaloc}
-\left(\frac{x^2}{4}u_q \sin u+\frac{x^2}{4}u_x u_{qx}+\dfrac{x u_x u_q}{4}\right)dq.
\end{equation}
\end{proposition}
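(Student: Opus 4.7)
The plan is to substitute the explicit expressions for $P_0$, $m_1^{(\infty)}$, and $\stackrel{\circ}{m_1}$ (all displayed in the paragraph preceding the proposition) directly into the already-localized formula \eqref{loc6} and reduce each trace term separately for the three vector fields $\partial_x$, $\partial_p$, and $\partial_q$. The computational tools needed are minimal: the Pauli-matrix identities $\mathrm{Tr}(\sigma_i)=0$, $\mathrm{Tr}(\sigma_i\sigma_j)=2\delta_{ij}$, $\sigma_1^2=I$; and the observation that since $P_0 = e^{-iu\sigma_1/2}$, one has $(\partial P_0)P_0^{-1}=-\tfrac{i(\partial u)\sigma_1}{2}$ for any vector field $\partial$. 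Consequently the middle term in \eqref{loc6} immediately reduces to $-\tfrac{xu_x(\partial u)}{4}$. Also, the $\sigma_2$ components of $m_1^{(\infty)}$ and $\stackrel{\circ}{m_1}$ drop out of the $\mathrm{Tr}(\sigma_3\cdot)$ terms, so only the $\sigma_3$ components—both linear in $\cos u-1$ and in $u_x^2$—contribute to the first and third terms.

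For $\partial=\partial_p$ (and symmetrically $\partial_q$) the parameter $x$ is fixed, so differentiation reduces to the chain rule via $\partial_p(\cos u -1) = -u_p\sin u$ and $\partial_p(u_x^2/2) = u_xu_{px}$. Collecting the three contributions yields exactly the coefficient of $dp$ in \eqref{omegaloc} after straightforward arithmetic; the Painlev\'e equation is not invoked here. For $\partial=\partial_x$ the computation is structurally the same, but $\stackrel{\circ}{m_1}$ has explicit $x$-dependence through its $x^2/16$ prefactor, which generates additional terms proportional to $u_xu_{xx}$, $xu_x^2$, and $x(\cos u -1)$. The key simplification is to substitute equation \eqref{sineP3} in the form $u_{xx}+\sin u = -u_x/x$ to rewrite $u_xu_{xx}= -u_x^2/x - u_x\sin u$. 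This identity makes the pieces of order $x^2$ between the first and third terms of \eqref{loc6} cancel pairwise, and what survives is exactly $-\tfrac{xu_x^2}{8}+\tfrac{x}{4}(\cos u -1)$, i.e.\ the coefficient of $dx$ in \eqref{omegaloc}.

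The main obstacle is purely organizational: tracking the factor $\tfrac{x^2}{16}$ that distinguishes $\stackrel{\circ}{m_1}$ from $m_1^{(\infty)}$, keeping sign conventions from the $i\cdot i=-1$ that appears when combining the outer prefactors of \eqref{loc6} with the $-i$ hidden in the $\sigma_3$-coefficient of the $m$'s, and handling the two sources of $x$-dependence for $\partial_x$ (implicit through $u$ and its $x$-derivatives, and explicit through the $x^2/16$ factor). Once these are organized, every cancellation required is algebraic and the final expression \eqref{omegaloc} falls out.
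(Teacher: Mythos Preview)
Your proposal is correct and follows precisely the route indicated in the paper: the paragraph preceding the proposition already states that one should insert the explicit formulae for $P_0$, $m_1^{(\infty)}$, and $\stackrel{\circ}{m_1}$ into \eqref{loc6} and use \eqref{sineP3}, and your write-up simply carries out those substitutions in detail. Your observation that the Painlev\'e equation is needed only for the $dx$ component (to eliminate $u_{xx}$) while the $dp$ and $dq$ components fall out by pure chain rule is accurate and matches what the paper implicitly does.
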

We notice that from (\ref{omegaloc}) we have again the statement of Lemma \ref{l1}, that  is that $\omega[\partial_x] = \partial_x \ln \tau - \frac{x}{4}$.
We want to mention again that this part of the localization formulae has already been obtained in \cite{B}.  We also want to emphasize the important role
which is played by the $\lambda$ - equation \eqref{system} in the derivation of the $p, q$ - part of equation \eqref{omegaloc}. It is the
use of this equation that allowed us to present the original Malgrange-Bertola integral \eqref{O}, first in the form \eqref{loc1}, and
then in the localized form \eqref{loc5}. In fact, similar technique has already been used in the study of Toeplitz determinants
with the Fisher-Hartwig singularities in paper \cite{DIK}  - see Appendix 6 and Lemma 6.2 of that paper.
\begin{remark} As it was pointed out to the authors by M. Bertola, equation (\ref{jumppartial}) can be used in the derivations
of this section one more time and help to make a significant short cut from equation (\ref{loc1}) to the localized
form (\ref{loc5}). Indeed, Bertola's suggestion is to use relation (\ref{jumppartial}) for the product
$Y_-(\lambda)G(\lambda)Y_+^{-1}(\lambda)$ in the last integral of (\ref{loc1}) directly and
rewrite this integral as
$$
\intop_{{\Gamma}} \mathrm{Tr}\Bigl(A(\lambda)Y_-(\lambda)\partial G(\lambda)Y^{-1}_+(\lambda)\Bigr) \dfrac{d\lambda}{2\pi i} =\int_{\Gamma}\mathrm{Tr}\,\left(A(\lambda)\Bigl( \partial Y_+(\lambda)Y^{-1}_+(\lambda) - \partial Y_-(\lambda)Y^{-1}_-(\lambda) \Bigr)\right)
\frac{d\lambda}{2\pi i}
$$
\begin{equation}\label{bertola}
= \sum_{poles\,of\,A(\lambda)d\lambda}\mathop{res}\,\mathrm{Tr}\,\Bigl(A(\lambda) \partial Y(\lambda)Y^{-1}(\lambda)\Bigr).
\end{equation}
This is a quite general construction which allows one to localize the Malgrange-Bertola form for an arbitrary isomonodromic 
Riemann-Hilbert problem. In our  case, one has to evaluate the, properly understood,  residues at the points $\lambda=0, \infty$.
The result will be equation (\ref{loc5}).
\end{remark}
\section{Proof of Theorem \ref{the}}
Let us compute $d\omega$. First we have,
$$
d\left[\left(\dfrac{x u_x^2}{8}-\dfrac{x}{4}(\cos u-1)\right)dx\right]=\left(\dfrac{xu_xu_{px}}{4}+\dfrac{xu_p\sin u }{4}\right)
dp\wedge dx+\left(\dfrac{xu_xu_{qx}}{4}+\dfrac{x u_q\sin u }{4}\right)dq\wedge dx.
$$
Then, using the fact that $u(x)$ satisfies equation \eqref{sineP3}, we get that
$$
d\left[\left(\dfrac{x^2}{4}u_p \sin u+\dfrac{x^2}{4}u_x u_{px}+\dfrac{x u_x u_p}{4}\right)dp\right]=
\left(\dfrac{xu_xu_{px}}{4}+\dfrac{xu_p\sin u }{4}\right)dx\wedge dp$$
$$
+\left(\dfrac{x^2}{4}u_{pq}\sin u+
\dfrac{x^2}{4}u_p u_q \cos u+ \dfrac{x^2}{4}u_{px} u_{qx}+\dfrac{x^2}{4}u_x u_{xpq}+\dfrac{x}{4}u_x u_{pq}
+\dfrac{x}{4}u_p u_{qx}\right)dq\wedge dp.
$$
and
$$
d\left[\left(\dfrac{x^2}{4}u_q \sin u+\dfrac{x^2}{4}u_x u_{qx}+\dfrac{x u_x u_q}{4}\right)dq\right]=
\left(\dfrac{xu_xu_{qx}}{4}+\dfrac{xu_q\sin u }{4}\right)dx\wedge dq$$
$$
+\left(\dfrac{x^2}{4}u_{pq}\sin u+
\dfrac{x^2}{4}u_p u_q \cos u+ \dfrac{x^2}{4}u_{px} u_{qx}+\dfrac{x^2}{4}u_x u_{xpq}+\dfrac{x}{4}u_x u_{pq}
+\dfrac{x}{4}u_q u_{px}\right)dp\wedge dq.
$$
Adding up the last three equations we obtain that
 $$
 d\omega=\frac {v_p u_q - v_qu_p}{4}dq\wedge dp,
 $$ 
 where $v=xu_x$. From equation (\ref{sineP3}) it follows that
 $$
 \frac{d}{dx}(v_p u_q - v_qu_p) = 0,
 $$
 and hence we can observe that
 $$
 d\omega = \lim_{x\to 0}d\omega = \frac {\alpha_p \beta_q - \alpha_q\beta_p}{4}dq\wedge dp = \frac{d\beta\wedge d\alpha}{4}.
 $$
Therefore, if we define 
 $$
w=\omega+\frac{x}{4}dx+\frac{\alpha d\beta}{4},
$$
then the form $w$  will be a closed form on the full set of parameters, $(x, p, q)$ and such that $w[\partial_x]=w_{JMU}[\partial_x]$. 
This means we can put,
$$
\tau=e^{\intop w},
$$
and this equation would define the tau-function up to a constant, which does not depend on $p$ and $q$.
\begin{remark} It is worth noticing, that from our analysis it follows that, in the case of the
Painlev\'e III equation (\ref{sineP3}), the external differential of the (modified) 
Malgrange-Bertola form $\omega$ is proportional  to the canonical symplectic form (\ref{symp})  for the Hamiltonian dynamics of the Painlev\'{e} equation; indeed, we have that 
 i.e.,
$$
 d\omega =- \frac{1}{4}\Omega.
 $$
\end{remark}
The next step is to evaluate the small and the large $x$  asymptotics of the form $w$. To this end we shall
use asymptotics \eqref{atzero},\eqref{atinfty} for $u(x)$ and  make the following temporary  technical assumptions,
\begin{equation}\label{techass}
|\Im \nu| < \frac{1}{6},\quad |\Im\alpha| < 1.
\end{equation}
In our calculations we will need more terms of the large $x$ asymptotics at infinity which are given
in \cite{ILT},  
$$
u(x)=b_+{{e}^{ix}}{x}^{i\nu-\frac{1}{2}}+b_-{{e}^{-ix}}{x}^{-i\nu-\frac{1}{2}}
$$
$$
+\frac{i b_+}{8} ( 6{\nu}^{2}+4i\nu-1 ) {
{e}^{ix}}{x}^{i\nu-\frac{3}{2}}-\frac{i b_-}{8}( 6{\nu}^{2}-4i\nu-1 ) {{e}^{
-ix}}{x}^{-i\nu-\frac{3}{2}}
$$
\begin{equation}\label{inftyext}
-\frac{1}{48}{b^{3}_+}{
{ e}^{3 ix}}{x}^{3i\nu-\frac{3}{2}}-\frac{1}{48}{b^{3}_-}{{e}^{-3ix}}{x}^
{-3i\nu-\frac{3}{2}} + O(x^{-\frac{5}{2}+5|{\Im \nu}|}).
\end{equation}
Substituting this asymptotics at the right hand side of equation (\ref{omegaloc}), we shall arrive, after rather tedious though
straightforward calculations,  at the following
asymptotic representation of the form $\omega$ as $x \to \infty$,
$$
\omega=d(2\nu x+{\nu^2}{\ln x}+\nu^2)-\frac{i}{4}(b_+db_{-}-b_-db_{+})+\left(\frac{ib_+^2}{8}e^{2ix}x^{2i\nu-1}-\frac{ib_-^2}{8}e^{-2ix}x^{-2i\nu-1}\right)dx
$$
\begin{equation}\label{omegainfty}
+ O(x^{-2+6|\mathrm{\Im}\nu|})dx+O(x^{-1+6|\mathrm{\Im \nu}|}))dp+
O(x^{-1+6|\mathrm{\Im \nu}|}))dq,\quad x\to \infty,
\end{equation}
The derivation of the small $x$ asymptotics of the form $\omega$ is based just  on the estimate (\ref{atzero}), i.e., no need for its 
extension, and it is much  easy to obtain,
$$
\omega=d\left(-\frac{\alpha^2}{8}\ln x-\frac{\alpha^2}{8}\right)-\frac{\alpha d\beta}{4}+O(x^{1-|\mathrm{\Im}(\alpha)|}))dx
$$
\begin{equation}\label{omegazero}
+ O(x^{2-|\mathrm{\Im}(\alpha)|}\ln x))dp+O(x^{2-|\mathrm{\Im}(\alpha)|}\ln x))dq, \quad x\to 0.
\end{equation}
As it has already been indicated, the derivations of formulae (\ref{omegainfty}) and (\ref{omegazero}) are straightforward.
However, because of the importance  of these formulae for our further analysis, we present the details of their derivations in the Appendix.

In  view of the  assumptions (\ref{techass}), estimates  (\ref{omegainfty}) and (\ref{omegazero}) yield the following
asymptotic representation for the form $w$,
\begin{equation}\label{omegaas1}
w = -d\left(\frac{\alpha^2}{8}\ln x +\frac{\alpha^2}{8}\right) + o(1), \quad x \to 0,
\end{equation}
and
\begin{equation}\label{omegaas2}
w = d\left(2\nu x + \nu^2\ln x +\nu^2\right) -\frac{i}{4}(b_+db_- -  b_- db_+) +\frac{x}{4}dx +\frac{\alpha d\beta}{4} + o(1), 
\quad x \to \infty.
\end{equation}
On the other hand, from (\ref{15}) and (\ref{16}) we have that
\begin{equation}\label{omegaas3}
w = -d\left(\frac{\alpha^2}{8}\ln x\right) + d\ln C_0 + o(1), \quad x \to 0,
\end{equation}
and
\begin{equation}\label{omegaas4}
w = d\left(2\nu x + \nu^2\ln x +\frac{x^2}{8}\right)  + d\ln C_{\infty} + o(1), 
\quad x \to \infty.
\end{equation}
The comparison of (\ref{omegaas1}) - (\ref{omegaas2}) and (\ref{omegaas3}) - (\ref{omegaas4})
implies that
$$
d\ln C_0 = -d\left(\frac{\alpha^2}{8}\right) 
$$
and
\begin{equation}
d\ln C_{\infty} =  d\nu^2 -\frac{i}{4}(b_+db_- -  b_- db_+) +\frac{\alpha d\beta}{4}.
\end{equation}
The last two equations mean that 
\begin{equation}\label{777}
d\ln\frac{C_{\infty}}{C_0} = d\left(\nu^2 +\frac{\alpha^2}{8} - i\nu\right) + \frac{\alpha d\beta}{4} - \frac{i}{2}b_+db_-
\end{equation}
(where we have also taken into account (\ref{nu})), or that
\begin{equation}\label{ccint}
\ln{\frac{C_\infty}{C_0}}=
\nu^2+\frac{\alpha^2}{8}-i\nu+\frac{1}{4}\intop (\alpha d\beta -2ib_+db_-) +c,
\end{equation}
where $c$ is the numerical constant, independent on $p$ and $q$.

Following \cite{ILT}, we introduce notation 
\begin{equation} \label{1}
e^{-4\pi i \rho}=\dfrac{\sin 2\pi (\sigma + \eta)}{\sin 2\pi\eta}.\nonumber
\end{equation} 
Using this and the connection formulae (\ref{connect}), (\ref{alphasigma}), we
can re-write the differential form \linebreak $\frac{1}{4}( \alpha d\beta-2ib_+db_-)$
as the differential form in variables $\eta$, $\rho$, $\sigma$ and $\nu$,
$$
\frac{1}{4}(\alpha d\beta -2ib_+db_-) = -8\pi i(\sigma d\eta +i\nu d\rho) +2\pi id\eta -(12 -48\sigma)\ln 2d\sigma
$$
\begin{equation}\label{etarho}
+(i\pi +4\ln2)\nu d\nu +(1-4\sigma)d\ln\frac{\Gamma(1-2\sigma)}{\Gamma(2\sigma)} +2i\nu d\ln\Gamma(1+i\nu).\nonumber
\end{equation}
Therefore, we can re-write (\ref{ccint}) as 
$$
\ln{\frac{C_\infty}{C_0}}=
\nu^2+4\sigma -8\sigma^2-i\nu +  2\pi i\eta- 12\sigma \ln 2  + 24\sigma^2\ln2 + \frac{i\pi\nu^2}{2}
+2\nu^2\ln2 
$$
\begin{equation}\label{ccint2}
-8\pi i\intop (\sigma d\eta +i\nu d\rho)  + \intop(1-4\sigma)d\ln\frac{\Gamma(1-2\sigma)}{\Gamma(2\sigma)}
+ \intop 2i\nu d\ln\Gamma(1+i\nu) +  c.
\end{equation}
It remains to evaluate the integrals in (\ref{ccint2}). For the integrals involving the $\Gamma$-functions
one gets,
\begin{equation}
\label{8}
\intop (1-4\sigma)d\ln\dfrac{\Gamma(1-2\sigma)}{\Gamma(2\sigma)}=\ln\dfrac{\Gamma(1-2\sigma)}{\Gamma(2\sigma)}-4\sigma+8\sigma^2+2\ln\Bigl(G(1-2\sigma)G(1+2\sigma)\Bigr)+c,
\end{equation}
\begin{equation}
\label{9}
\intop 2 i \nu d \ln \Gamma(1+i\nu)=i\nu-\nu^2-i\nu\ln(2\pi)+2\ln G(1+i\nu)+c,
\end{equation}
where $G(z)$ is the Barnes G-function and we have used the classical formula,
\begin{equation}\label{7}
\intop \ln\Gamma(x)dx=\dfrac{z(1-z)}{2}+\dfrac{z}{2}\ln(2\pi)+z\ln\Gamma(z)-\ln G(1+z)+c.\nonumber
\end{equation}
The most challenging, i.e., the first integral in (\ref{ccint2}) has already been evaluated in
\cite{ILT}. Here is the result.
\begin{equation}\label{2}
\intop \sigma d \eta +i \nu d\rho=\sigma\eta + i\nu\rho -\mathcal{W}(\sigma,\nu)+c,
\end{equation}
where the function $\mathcal{W}(\sigma,\nu)$ is expressed in terms of the dilogarithm $Li_2(z)$, 
\begin{equation}\label{3}
8\pi^2\mathcal{W}(\sigma,\nu)=Li_2(-e^{2\pi i(\sigma+\eta-i\frac{\nu}{2})})+Li_2(-e^{-2\pi i(\sigma+\eta+i\frac{\nu}{2})})-4\pi^2\eta^2+\pi^2\nu^2,
\end{equation}
Taking into account yet another classical formula,
\begin{equation}\label{4}
Li_2(e^{2\pi i z})=-2\pi i \ln \hat{G}(z)-2\pi i z \ln \dfrac{\sin(\pi z)}{\pi}-\pi^2 z(1-z)+\dfrac{\pi^2}{6}, \nonumber
\end{equation}
where
\begin{equation}\label{5}
\hat{G}(z)=\dfrac{G(1+z)}{G(1-z)},\nonumber
\end{equation}
and the elementary relation,
\begin{equation}\label{6}
2\cos\pi(\sigma+\eta\pm\frac{i\nu}{2})=e^{i\pi(\pm\sigma\mp\eta-\frac{i\nu}{2}-4\rho)},
\end{equation}
we arrive at the following final expression for the first integral in (\ref{ccint2})
$$
-8\pi i \intop \sigma d\eta + i\nu d\rho=-8\pi i\sigma \eta +2 \ln \dfrac{\hat{G}(\sigma+\eta+\frac{1-i\nu}{2})}{\hat{G}(\sigma+\eta+\frac{1+i\nu}{2})}-4\pi \eta^2 -i\pi \nu^2 +2i \ln(2\pi)\nu 
$$
\begin{equation}
\label{10}
-\dfrac{3\pi i \nu^2}{2}-2\pi \sigma^2 +6\pi i \eta^2+4\pi i \sigma\eta-2\pi i \sigma + 2\pi i \eta.
\end{equation}
Substituting  formulae \eqref{8}, \eqref{9}, and \eqref{10} in \eqref{ccint2} we arrive at the equation,
$$
\dfrac{C_\infty}{C_0}=c_1(2\pi)^{{i\nu}}2^{2\nu^2+{\sigma^2}{24}-12\sigma}e^{2\pi i(\eta^2-2\sigma\eta-\sigma^2+2\eta -\sigma) }
$$
\begin{equation}\label{final}
\times\dfrac{\Gamma(1-2\sigma)}{\Gamma(2\sigma)}\left(\dfrac{G(1+i\nu)G(1+2\sigma)G(1-2\sigma)\hat{G}(\sigma+\eta+\frac{1-i\nu}{2})}{\hat{G}(\sigma+\eta+\frac{1+i\nu}{2})}\right)^{{2}},
\end{equation}
where $c_1$ is a numerical constant. We know, that if $u=0, \sigma=\eta=\frac{1}{4}, \nu=0$, then $\tau=\mathrm{const} \cdot e^{\frac{x^2}{8}}$ and $C_\infty=C_0$. This choice of parameters satisfies conditions \eqref{set}.
Hence, 
$$
c_1=\dfrac{2^{\frac{3}{2}}e^{-i\frac{\pi}{4}}}{\pi (G(\frac{1}{2}))^4}.
$$
To complete the proof of Theorem \ref{the} we only need now to lift the technical assumption  (\ref{techass}). This
can be justified by noticing that the both sides of (\ref{final}) are analytic functions of the Riemann-Hilbert data.
(For the left hand side it follows from the general Birkhoff-Grothendieck-Malgrange theory.)
\begin{remark}
The variables $(\eta,\sigma)$ and $(-i\rho,\nu)$ are canonical variables. In fact, one has that \cite{ILT},
$$
\Omega = 32\pi i d\eta\wedge d\sigma = 32\pi d\rho\wedge d\nu.
$$
The function $\mathcal{W}$  was introduced in \cite{ILT} as the  {\it generating function} of the canonical transformation
$$
(\eta,\sigma)\to (-i\rho,\nu).
$$
Indeed, using  \eqref{3}, \eqref{6} and the fact that $Li'_2(z)=-z^{-1}\ln(z-1)$, one can show that \cite{ILT},
$$
\eta=\dfrac{\partial \mathcal{W}}{\partial\sigma},\quad\mbox{and}\quad i\rho=\dfrac{\partial \mathcal{W}}{\partial\nu}.
$$
The last equation is also equivalent to the integral formula (\ref{2}).
\end{remark}
In \cite{ILT}, and in fact earlier in the pioneering works \cite{Lis1}, \cite{Lis2},  the derivation of the constant terms in the asymptotics  of the tau-functions was based on the heuristic assumption (followed from the conformal block representation of the tau-functions) that
these constants are related to the generating functions of the relevant canonical transformations
between the canonical pairs associated with different critical points. In the case of equation (\ref{sineP3})
the points are $0$ and $\infty$ and the generating function is the function $\mathcal{W}$. This is a very
important conceptual point, and our analysis justifies it in the case of the Painlev\'e III equation (\ref{sineP3}).
It is also worth noticing that  this hamiltonian interpretation of the ratio $C_{\infty}/C_0$ is already present in formula (\ref{ccint}). 
Indeed, this formula tell us that the   logarithm of the ratio $C_{\infty}/C_0$ is,
up to the    elementary  function, $\nu^2 +\alpha^2/8 -i\nu$, the generating function of the canonical transformation
between the Cauchy data $(\alpha, \beta)$ and asymptotic at infinity data $(b_+, b_-)$.
\section{Proof of the ILT-Conjecture}
In \cite{ILT} different $\tau$-function was introduced
\begin{equation}
\label{12}
\tau_m(2^{-12}x^4)=(\tau(x))^{\frac{1}{2}}x^{\frac{1}{4}}e^{\frac{iu(x)}{4}},
\end{equation}
\begin{equation}
\label{13}
\tau_m(2^{-12}x^4)=\dfrac{2^{-12\sigma^2}x^{4\sigma^2}}{G(1+2\sigma)G(1-2\sigma)}(1+o(1)),\quad x\to 0,
\end{equation}
\begin{equation}
\label{14}
\tau_m(2^{-12}x^4)=\chi(\sigma,\nu)e^{\frac{i\pi\nu^2}{4}}2^{\nu^2}(2\pi)^{-\frac{i\nu}{2}}G(1+i\nu)x^{\frac{\nu^2}{2}+\frac{1}{4}}e^{\frac{x^2}{16}+\nu x}(1+o(1)), \quad x\to \infty.
\end{equation}
But from formulae \eqref{15}, \eqref{16}, \eqref{12} we also have
\begin{equation}
\label{17}
\tau_m(2^{-12}x^4)={C_0^{\frac{1}{2}}e^{-\frac{i\pi}{4}+i\pi\eta}2^{\frac{3}{2}-6\sigma}x^{4\sigma^2}}\dfrac{\Gamma(1-2\sigma)}{\Gamma(2\sigma)}(1+o(1)),\quad x\to 0,
\end{equation}
\begin{equation}
\label{18}
\tau_m(2^{-12}x^4)=C_\infty^{\frac{1}{2}}x^{\frac{\nu^2}{2}+\frac{1}{4}}e^{\frac{x^2}{16}+\nu x}(1+o(1)), \quad x\to \infty.
\end{equation}
From formulae \eqref{13}, \eqref{14}, \eqref{17}, \eqref{18} we get
\begin{equation}
\label{19}
\chi(\sigma,\nu,\eta)=\dfrac{C_\infty^{\frac{1}{2}}}{C_0^{\frac{1}{2}}}\dfrac{(2\pi)^{\frac{i\nu}{2}}2^{-\frac{3}{2}-\nu^2-12\sigma^2+6\sigma}e^{-\frac{i\pi\nu^2}{2}-i\pi\eta+\frac{i\pi}{4}}}{G(1+i\nu)G(1+2\sigma)G(1-2\sigma)}\left(\dfrac{\Gamma(2\sigma)}{\Gamma(1-2\sigma)}\right)^{\frac{1}{2}}.\nonumber
\end{equation}
Substituting here expression for $\frac{C_\infty}{C_0}$, we get the formula conjectured in \cite{ILT} 
$$
\chi(\sigma,\nu,\eta)=(2\pi)^{i\nu-\frac{1}{2}}e^{i\pi(\eta^2-2\sigma\eta-\sigma^2+\eta-\sigma-\frac{\nu^2}{4}+\frac{1}{8})}\dfrac{2^{-\frac{1}{4}}}{G^2(\frac{1}{2})}\dfrac{\hat{G}(\sigma+\eta+\frac{1-i\nu}{2})}{\hat{G}(\sigma+\eta+\frac{1+i\nu}{2})}.
$$
\section{Hamiltonian meaning of Malgrange-Bertola Differential form}
One can rewrite \eqref{omegaloc} 
$$
\omega=-\frac{xdx}{4}-\frac{\mathcal{H}dx}{4}-\frac{x\mathcal{H}_pdp}{4}-\frac{vu_pdp}{4}-\frac{x\mathcal{H}_qdq}{4}-\frac{vu_qdq}{4}
$$
$$
=-\frac{xdx}{4}-\frac{d(x\mathcal{H})}{4}+\frac{x\mathcal{H}_xdx}{4}-\frac{vdu}{4}+\frac{vu_xdx}{4}.
$$
It follows from \eqref{sineP3} that
$$
x\mathcal{H}_x+vu_x=\mathcal{H}.
$$
Using this formula we get
\begin{equation}\label{ham1}
\omega=-\frac{1}{4}d\left(\frac{x^2}{2}+x\mathcal{H}\right)+\frac{1}{4}\Bigl(\mathcal{H}dx-vdu\Bigr).
\end{equation}
We want to emphasize that all the objects are considered as the functions of the triple $(x, p, q)$ and
all the differentials are taken with respect to all  these three variables. 

From (\ref{ham1}) it follows that up to the multiplication by -4 and the subtraction of a total differential,
the Malgrange-Bertola form $\omega$ coincides with the form
\begin{equation}\label{ham2}
vdu-\mathcal{H}dx \equiv vu_xdx+vu_pdp+vu_qdq-\mathcal{H}dx.
\end{equation}
The restriction of this form on a trajectory of the Hamiltonian system (\ref{sineP3}), i.e. on the curve,
\begin{equation}\label{traj}
p =\mbox{const}, \quad q = \mbox{const},
\end{equation}
in the extended space of the monodromy data $\{(x, p, q)\}$ coincides with the form
$$
dS(x) =vu_xdx-\mathcal{H}dx,
$$
where $S(x)$ is the classical action evaluated on the trajectory (\ref{traj}). Hence, the  Malgrange-Bertola form $\omega$
can be treated as a natural extension of the canonical form $\mathcal{H}dx-vu_xdx$. It follows then, that the
tau-function itself can be identified with the  classical action. More precisely, along any classical trajectory,
we have that
\begin{equation}\label{tauS}
d\ln \tau \equiv \frac{d\ln\tau}{dx}dx = -\frac{1}{4}dS -\frac{1}{4}d(x \mathcal{H})\equiv 
\left(-\frac{1}{4}\frac{dS}{dx} - \frac{1}{4}\frac{d(x\mathcal{H})}{dx}\right)dx.
\end{equation}
Of course, this differential identity can be easily (after it is written) checked directly. In its turn,
it allows us to write the following representation for the ratio $C_{\infty}/C_{0}$ in terms of the regularized
action integral,
\begin{equation}\label{action1}
\ln\frac{C_{\infty}}{C_0}=\lim_{t_0\to 0}\lim_{t_1\to+\infty}\left(\intop_{t_0}^{t_1}\frac{\mathcal{H}-vu_x}{4}dx-\Bigl.x\mathcal{H}\Bigr|_{t_0}^{t_1}-\frac{t_1^2}{8}-2\nu t_1-\nu^2\ln t_1-\frac{\alpha^2}{8}\ln t_0\right).
\end{equation}
It is worth noticing that, unlike the integral $\int \mathcal{H}dx$, the action integral suits well to the  differentiation with respect 
to $p$ and $q$; indeed, after the relevant integration by part the remaining  integral term  would disappear 
in view of (\ref{sineP3}). Therefore, equation (\ref{action1}) provides us  with the possibility of an alternative derivation of 
our key formula (\ref{777}). This derivation would  be very similar to the evaluation of the
action integral of the McCoy-Tracy-Wu solution of the PIII equation in \cite{LZ}.
\begin{remark}
Observe that the extended (with respect to $x$, $p$, $q$)
differential of the form  $vdu - Hdx$ is the symplectic form $\Omega$.
Therefore, the fact that the Malgrange-Bertola form differs
from the form $-\frac{1}{4}(vdu-Hdx)$
by a total differential is a fact of general theory; indeed, the (extended) differentials
of the both forms coincide -- they both are the same 2-form, i.e.$-\frac{1}{4}\Omega$.
The additional information we are obtaining in \eqref{ham1} is the explicit evaluation of this total differential. This allows us to relate the tau-function and the action differential
explicitly which would be important for the alterantive evaluation
of the tau-constant via the action integral.
\end{remark}
It might seem quite surprising  that one needed to start with the  Malgrange-Bertola form
in order to discover a rather  simple differential identity (\ref{tauS}). The absence of the very idea
that the logarithm of the tau-function might differ from the classical action just by a  total
differential partially explains this. We now believe that the similar fact should be
true for any isomonodrtomy tau-function, although it has been apparently missing  
in the general monodrmy theory of linear systems {\footnote{In 2000, the first co-author together
with Percy Deift tried to use technique of \cite{LZ} for evolution of the constant factors
in the asymptotics of the Painlev\'e V tau-function associated with the sine-kernel.
We failed then because we did not have  the analog of the relation (\ref{tauS})
for the Painlev\'ve V tau - function we were working with. Perhaps,
it would make sense to revisit the issue now (although, the relevant constant
factors have already been evaluated since then).}}.

% Indeed, this sort of analysis of the  action integral was first used in context of 
%the McCoy-Tracy-Wu solution of equation PIII in \cite{LZ}. The obstacle for the use of this approach
%for the evaluation of the ration $C_{\infty}/C_{0}$  has been the absence  of equation (\ref{tauS}). 

\section*{Acknowledgements}
The work was supported in
part by NSF grant DMS-1361856, and by the  SPbGU grant N 11.38.215.2014. The authors are also grateful to M. Bertola for very useful comments.
\section{Appendix. The Derivation of Estimates  (\ref{omegainfty}) and (\ref{omegazero}).}
Substitution of the extended asymptotics (\ref{inftyext}) into the right hand side of (\ref{omegaloc}) leads 
to the following expressions for its individual terms.
\begin{itemize}
\item $
 { \dfrac{xu_x^2}{8}}=-\dfrac{b_+^2e^{2ix}x^{2i\nu}}{8}-\dfrac{b_-^2e^{-2ix}x^{-2i\nu}}{8}-\dfrac{b_+^2e^{2ix}x^{2i\nu-1}}{32}(6i\nu^2+2\nu+3i)$
\begin{multline*}
+\dfrac{b_-^2e^{-2ix}x^{-2i\nu-1}}{32}(6i\nu^2-2\nu+3i)-\nu-\dfrac{\nu^2}{x}+\dfrac{b_+^4e^{4ix}x^{4i\nu-1}}{64}+\dfrac{b_-^4e^{-4ix}x^{-4i\nu-1}}{64}+O(x^{-2+6|{\Im \nu}|});
\end{multline*}
\item
$
-{ \dfrac{x}{4}(\cos u -1)}=\dfrac{b_+^2e^{2ix}x^{2i\nu}}{8}+\dfrac{b_-^2e^{-2ix}x^{-2i\nu}}{8}-\nu-\dfrac{b_+^4e^{4ix}x^{4i\nu-1}}{64}-\dfrac{b_-^4e^{-4ix}x^{-4i\nu-1}}{64}
$
\begin{multline*}
+\dfrac{b_+^2e^{2ix}x^{2i\nu-1}}{32}(6i\nu^2+2\nu-i)-\dfrac{b_-^2e^{-2ix}x^{-2i\nu-1}}{32}(6i\nu^2-2\nu-i)+O(x^{-2+6|{\Im \nu}|});
\end{multline*}
\item
$
{ \dfrac{x^2}{4}u_p \sin u}=\dfrac{b_+b_{+p}e^{2ix}x^{2i\nu}}{16}(6i\nu^2+5\nu-i)+\dfrac{b_-b_{+p}x}{4}+\dfrac{b_+b_{+p}e^{2ix}x^{2i\nu+1}}{4}-\dfrac{b_+^3b_{+p}e^{4ix}x^{4i\nu}}{16}
$
\begin{multline*}
-\dfrac{3b_-^3b_{+p}e^{-2ix}x^{-2i\nu}}{64}+\dfrac{b_-b_{+p}\nu}{4} +\dfrac{ib_+^2\nu_{p}e^{2ix}x^{2i\nu+1}\ln x}{4}-\dfrac{ib_-^2\nu_{p}e^{-2ix}x^{-2i\nu+1}\ln x}{4}
\\
-\dfrac{ib_+^4\nu_{p}e^{4ix}x^{4i\nu}\ln x}{16}-\dfrac{b_-^2\nu_{p}e^{-2ix}x^{-2i\nu}\ln x}{16}(6\nu^2+2i\nu-1)-\dfrac{b_+^2\nu_{p}e^{2ix}x^{2i\nu}\ln x}{16}(6\nu^2-2i\nu-1)
\\
+\dfrac{b_{-p}b_{+}x}{4}+\dfrac{b_-b_{-p}e^{-2ix}x^{-2i\nu+1}}{4}-\dfrac{3b_+^3b_{-p}e^{2ix}x^{2i\nu}}{64}-\dfrac{b_-^3b_{-p}e^{-4ix}x^{-4i\nu}}{16}+\dfrac{b_+b_{-p}\nu}{4}
\\
+\dfrac{ib_-^4\nu_{p}e^{-4ix}x^{-4i\nu}\ln x}{16}-\dfrac{b_-b_{-p}e^{-2ix}x^{-2i\nu}}{16}(6i\nu^2-5\nu-i)+\dfrac{b_+^2\nu_{p}e^{2ix}x^{2i\nu}}{8}(3i\nu-1)
\\
-\dfrac{b_-^2\nu_{p}e^{-2ix}x^{-2i\nu}}{8}(3i\nu+1)+\nu_p\nu+O(x^{-1+6|{\Im \nu}|});
\end{multline*}
\item
$
{ \dfrac{x^2}{4}u_q \sin u} = \Bigl\{ p\to q\Bigr\};
$
\item
$
{ \dfrac{x^2}{4}u_x u_{px}}=-\dfrac{b_+b_{+p}e^{2ix}x^{2i\nu+1}}{4}-\dfrac{ib_+^2\nu_{p}e^{2ix}x^{2i\nu+1}\ln x}{4}-\dfrac{b_-b_{-p}e^{-2ix}x^{-2i\nu+1}}{4}
$
\begin{multline*}
+\dfrac{ib_-^2\nu_{p}e^{-2ix}x^{-2i\nu+1}\ln x}{4}-\dfrac{b_+^2\nu_{p}e^{2ix}x^{2i\nu}}{8}(3i\nu+1)-\dfrac{b_+b_{+p}e^{2ix}x^{2i\nu}}{16}(6i\nu^2+\nu+3i)+\dfrac{b_-b_{+p}x}{4}
\\
+\dfrac{b_+b_{-p}x}{4}-\nu_p\nu+\dfrac{b_+b_{-p}\nu}{4}+\dfrac{b_-b_{+p}\nu}{4}+\dfrac{b_+^2\nu_{p}e^{2ix}x^{2i\nu}\ln x}{16}(6\nu^2-2i\nu+3)+\dfrac{b_+^3b_{+p}e^{4ix}x^{4i\nu}}{16}
\\
+\dfrac{ib_+^4\nu_{p}e^{4ix}x^{4i\nu}\ln x}{16}+\dfrac{b_-^2\nu_{p}e^{-2ix}x^{-2i\nu}\ln x}{16}(6\nu^2+2i\nu+3)+\dfrac{b_-b_{-p}e^{-2ix}x^{-2i\nu}}{16}(6i\nu^2-\nu+3i)
\\
+\dfrac{b_-^2\nu_{p}e^{-2ix}x^{-2i\nu}}{8}(3i\nu-1)+\dfrac{b_-^3b_{-p}e^{-4ix}x^{-4i\nu}}{16}-\dfrac{ib_-^4\nu_{p}e^{-4ix}x^{-4i\nu}\ln x}{16}
\\
-\dfrac{b_-^3b_{+p}e^{-2ix}x^{-2i\nu}}{64}-\dfrac{b_+^3b_{-p}e^{2ix}x^{2i\nu}}{64}+O(x^{-1+6|{\Im \nu}|});
\end{multline*}
\item
$
{ \dfrac{x^2}{4}u_x u_{qx}} = \Bigl\{ p\to q\Bigr\}; 
$
\item
$
 { \dfrac{xu_xu_p}{4}}=\dfrac{ib_+b_{+p}e^{2ix}x^{2i\nu}}{4}-\dfrac{b_+^2\nu_{p}e^{2ix}x^{2i\nu}\ln x}{4}+\dfrac{ib_+b_{-p}}{4}-2\nu_p\nu\ln x-\dfrac{ib_-b_{+p}}{4}
$
\begin{multline*}
-\dfrac{ib_-b_{-p}e^{-2ix}x^{-2i\nu}}{4}-\dfrac{b_-^2\nu_{p}e^{-2ix}x^{-2i\nu}\ln x}{4}+O(x^{-1+4|{\Im \nu}|});
\end{multline*}
\item
$
{ \dfrac{xu_xu_q}{4}}= \Bigl\{ p\to q\Bigr\}.
$
\end{itemize}
It is quite remarkable that the substitution of these long expressions into the right hand side of (\ref{omegaloc}) yields 
to the compact formula (\ref{omegainfty}).

For asymptotics at zero we get the following estimates.
$$
\dfrac{xu_x^2}{8}-\dfrac{x}{4}(\cos u -1)=\dfrac{\alpha^2}{8x}+O(x^{1-|\Im \alpha|}),
$$
$$
\dfrac{x^2 u_p \sin u}{4}=O(x^{2-|\mathrm{\Im}(\alpha)|}),\quad \dfrac{x^2 u_q \sin u}{4}=O(x^{2-|\mathrm{\Im}(\alpha)|}),
$$
$$
\dfrac{x^2u_x u_{px}}{x}=\dfrac{\alpha\alpha_p}{4}+O(x^{2-|\mathrm{\Im}(\alpha)|}),\quad
\dfrac{x^2u_x u_{qx}}{x}=\dfrac{\alpha\alpha_q}{4}+O(x^{2-|\mathrm{\Im}(\alpha)|}),
$$
$$
\dfrac{xu_x u_p}{4}=\dfrac{\alpha \alpha_p \ln x}{4} + \dfrac{\alpha \beta_p}{4}+O(x^{2-|\mathrm{\Im}(\alpha)|}\ln x),
$$
$$
\dfrac{xu_x u_q}{4}=\dfrac{\alpha \alpha_q \ln x}{4} + \dfrac{\alpha \beta_q}{4}+O(x^{2-|\mathrm{\Im}(\alpha)|}\ln x).
$$
These equations yield at once (\ref{omegazero}).

\end{document}